\theoremstyle{plain} 
\newtheorem{thm}{Theorem}
\theoremstyle{definition}
\newcommand{\E}{\mathsf{E}}
\newcommand{\V}{\mathsf{V}}
\newcommand{\unif}{{\sf Unif}}
\newcommand{\nm}{{\sf N}}
\newcommand{\expo}{{\sf Exp}}
\newcommand{\gam}{{\sf Gamma}}
\renewcommand{\phi}{\varphi}
\title{Simulating from a gamma distribution with small shape parameter}
\author{
Chuanhai Liu \\
Department of Statistics \\
Purdue University \\
{\tt chuanhai@purdue.edu} \\
\mbox{} \\
Ryan Martin \quad and \quad Nick Syring \\
Department of Mathematics, Statistics, and Computer Science \\
University of Illinois at Chicago \\
{\tt (rgmartin, nsyring2)@uic.edu} 
}
\date{\today}
\begin{document}

\maketitle 

\begin{abstract}  
Simulating from a gamma distribution with small shape parameter is a challenging problem.  Towards an efficient method, we obtain a limiting distribution for a suitably normalized gamma distribution when the shape parameter tends to zero.  Then this limiting distribution provides insight to the construction of a new, simple, and highly efficient acceptance--rejection algorithm.  Comparisons based on acceptance rates show that the proposed procedure is more efficient than existing acceptance--rejection methods.    

\smallskip

\emph{Keywords and phrases:} Acceptance rate; acceptance--rejection method; asymptotic distribution; exponential distribution; R software.
\end{abstract}

\section{Introduction}
\label{S:intro}

Let $Y$ be a positive gamma distributed random variable with shape parameter $\alpha > 0$, denoted by $Y \sim \gam(\alpha,1)$.  The probability density function for $Y$ is given by 
\[ p_\alpha(y) = \frac{1}{\Gamma(\alpha)} y^{\alpha-1} e^{-y}, \quad y > 0, \]
where the normalizing constant, $\Gamma(\alpha) = \int_0^\infty y^{\alpha-1} e^{-y} \,dy$, is the gamma function evaluated at $\alpha$.  This is an important distribution in statistics and probability modeling.  In fact, since the gamma distribution is closely tied to so many important distributions, including normal, Poisson, exponential, chi-square, F, beta, and Dirichlet, one could argue that it is one of the most fundamental \citep[e.g.,][]{johnson.kotz.bala.book.1994}.  Here we are particularly interested in the problem of simulating gamma random variables when the shape parameter $\alpha$ is small.  This problem is important because the small-shape gamma, with a large scale parameter, is a simple but useful model for positive size or lifetime random variables with small mean but large variance \citep[e.g.,][]{kleiber.kotz.2003}.  Unfortunately, the small-shape gamma distribution is not easy to work with, so, seemingly routine calculations can become inefficient or even intractable.  For example, in R, the functions related to the gamma distribution---in particular, the {\tt rgamma} function for sampling---become relatively inaccurate when the shape parameter is small \citep[][``GammaDist'' documentation]{Rmanual}.  To circumvent these difficulties, and to move towards new and more efficient software, we show that $\gam(\alpha,1)$, suitably normalized, has a simple and non-degenerate limiting distribution as $\alpha \to 0$.  This result is then used to develop a new and efficient algorithm for sampling from a small-shape gamma distribution.  

When the shape parameter $\alpha$ is large, it follows from the infinite-divisibility of the gamma distribution and Lindeberg's central limit theorem \citep[][Sec.~27]{billingsley} that the distribution of $Y$ is approximately normal.  Specifically, as $\alpha \to \infty$, 
\[ \alpha^{-1/2}(Y - \alpha) \to \nm(0,1) \quad \text{in distribution}. \]
For large finite $\alpha$, better normal approximations can be obtained by working on different scales, such as $\log Y$ or $Y^{1/3}$.  Our interest is in the opposite extreme case, where the shape parameter $\alpha$ is small, approaching zero.  Here, neither infinite-divisibility nor the central limit theorem provide any help.  In Theorem~\ref{thm:limit} below, we prove that $-\alpha \log Y$ converges in distribution to $\expo(1)$, the unit-rate exponential distribution, as $\alpha \to 0$.   

Motivated by the limit distribution result in Theorem~\ref{thm:limit}, we turn to the problem of simulating from a small-shape gamma distribution.  This is a challenging problem with many proposed solutions; see, for example, \citet{best1983}, \citet{kundu.gupta.2007}, \citet{tanizaki2008}, and \citet{xi.tan.liu.2013}.  For small shape parameters, the default methods implemented in R and MATLAB, due to \citet{ahrens.dieter.1974} and \citet{marsaglia.tsang.2000}, respectively, have some shortcomings in terms of accuracy and/or efficiency.  The exponential limit in Theorem~\ref{thm:limit} for the normalized gamma distribution suggests a convenient and tight envelope function to be used in an acceptance--rejection sampler \citep[e.g.,][]{devroye1986, flury1990}.  We flesh out the details of this new algorithm in Section~\ref{S:simulation} and provide R code at \url{www.math.uic.edu/~rgmartin}.  This new method is simple and, as we demonstrate in Section~\ref{S:efficiency}, is more efficient than existing methods in terms of acceptance rates.    


\section{Limit distribution result}
\label{S:limit}

For the gamma function $\Gamma(z)$ defined above, write $f(z) = \log \Gamma(z)$.  Then the digamma and trigamma functions are defined as $f_1(z) = f'(z)$ and $f_2(z) = f''(z)$, the first and second derivatives of the log gamma function $f(z)$.  Recall that these are related to the mean and variance of $\log Y$, with $Y \sim \gam(\alpha,1)$:
\[ \E_\alpha (\log Y) = f_1(\alpha) \quad \text{and} \quad \V_\alpha (\log Y) = f_2(\alpha). \]
These formulae are most directly seen by applying those well-known formulae for means and variances in regular exponential families \citep[][Corollary~2.3]{brown1986}.  Next, write $Z=-\alpha \log Y$.  To get some intuition for why multiplication by $\alpha$ is the right normalization, consider the following recurrence relations for the digamma and trigamma functions \citep[][Chap.~6]{abramowitz.stegun.1966}:
\[ f_1(\alpha) = f_1(\alpha + 1) - 1/\alpha \quad \text{and} \quad f_2(\alpha) = f_2(\alpha + 1) + 1/\alpha^2. \]
Then, as $\alpha \to 0$,  
\begin{align*}
\E_\alpha(Z) & = -\alpha f_1(\alpha) = -\alpha f_1(\alpha+1) + 1 = O(1), \\
\V_\alpha(Z) & = \alpha^2 f_2(\alpha) = \alpha^2 f_2(\alpha + 1) + 1 = O(1).
\end{align*}
That is, multiplication by $\alpha$ stabilizes the first and second moments of $\log Y$.  Towards a formal look at the limiting distribution of $Z$, define the characteristic function 
\begin{equation}
\label{eq:character}
\phi_\alpha(t) = \E_\alpha(e^{it Z}) = \E_\alpha(Y^{-i\alpha t}) = \Gamma(\alpha - i\alpha t)/\Gamma(\alpha), 
\end{equation}
where $i=\sqrt{-1}$ is the complex unit.   

\begin{thm}
\label{thm:limit}
For $Y \sim \gam(\alpha,1)$, $-\alpha \log Y \to \expo(1)$ in distribution as $\alpha \to 0$.  
\end{thm}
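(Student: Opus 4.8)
The plan is to argue entirely at the level of characteristic functions and then invoke L\'evy's continuity theorem. Almost all of the work is already visible in equation~\eqref{eq:character}: with $Z = -\alpha\log Y$ one has $\phi_\alpha(t) = \Gamma(\alpha - i\alpha t)/\Gamma(\alpha)$. I would first record why this holds. For any complex $s$ with $\Re(s) > -\alpha$ the integral
\[
\E_\alpha(Y^s) = \frac{1}{\Gamma(\alpha)}\int_0^\infty y^{s + \alpha - 1} e^{-y}\,dy
\]
converges absolutely and equals $\Gamma(\alpha + s)/\Gamma(\alpha)$ by the integral definition of the gamma function. Taking $s = -i\alpha t$, whose real part is $0 > -\alpha$ because $\alpha > 0$, gives $\phi_\alpha(t) = \E_\alpha(e^{itZ}) = \E_\alpha(Y^{-i\alpha t}) = \Gamma(\alpha - i\alpha t)/\Gamma(\alpha)$; note that the argument $\alpha - i\alpha t$ has positive real part, so everything stays in the region where $\Gamma$ is given by a convergent integral and is continuous and nonzero.

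The one real idea is to use the fact that $\Gamma$ has a simple pole at the origin with residue $1$ --- concretely, the recurrence $z\Gamma(z) = \Gamma(z+1)$, which on $\Re(z) > 0$ follows by integration by parts, together with continuity of $\Gamma$ near $z = 1$ and $\Gamma(1) = 1$. Applying $z\Gamma(z) = \Gamma(z+1)$ to numerator and denominator,
\[
\phi_\alpha(t) = \frac{\Gamma(\alpha - i\alpha t + 1)/(\alpha - i\alpha t)}{\Gamma(\alpha + 1)/\alpha} = \frac{1}{1 - it}\cdot\frac{\Gamma(\alpha - i\alpha t + 1)}{\Gamma(\alpha + 1)},
\]
the factors of $\alpha$ cancelling exactly. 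This cancellation is the analytic version of the moment stabilization noted just before the theorem statement: multiplying $\log Y$ by $\alpha$ is precisely what removes the $1/\alpha$ singularity carried by the normalizing constant $\Gamma(\alpha)$.

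Now fix $t \in \RR$ and let $\alpha \to 0$. Then $\alpha - i\alpha t + 1 \to 1$ and $\alpha + 1 \to 1$, so by continuity of $\Gamma$ at $1$ both gamma factors in the display tend to $\Gamma(1) = 1$, and hence $\phi_\alpha(t) \to 1/(1 - it)$. The function $t \mapsto 1/(1 - it)$ is the characteristic function of $\expo(1)$, and it is continuous at $t = 0$, so L\'evy's continuity theorem gives $Z \to \expo(1)$ in distribution, which is the assertion of the theorem.

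The argument is short and I do not expect a serious obstacle. The crux is simply recognizing that the $\alpha$-normalization cancels the pole of $\Gamma$ at the origin; after that the limit is mechanical. The only steps needing a little care are bookkeeping with the complex-valued gamma function --- checking that $\alpha - i\alpha t$ and the shifted arguments $\alpha + 1$ and $\alpha - i\alpha t + 1$ all have positive real part, so that $\Gamma$ and the recurrence behave as stated --- and, at the end, noting that the limit characteristic function is continuous at the origin so that L\'evy's theorem indeed applies.
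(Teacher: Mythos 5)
Your proposal is correct and follows essentially the same route as the paper's proof: rewrite $\phi_\alpha(t)=\Gamma(\alpha-i\alpha t)/\Gamma(\alpha)$ via the recurrence $z\Gamma(z)=\Gamma(z+1)$ so the factors of $\alpha$ cancel, leaving $1/(1-it)$ times a ratio of gamma values at arguments tending to $1$, and conclude by continuity of $\Gamma$ at $1$ and L\'evy's continuity theorem. The only difference is that you supply the bookkeeping (validity of \eqref{eq:character} and positivity of the real parts) that the paper leaves implicit.
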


\begin{proof}
Set $Z=-\alpha \log Y$.  The gamma function satisfies $\Gamma(z) = \Gamma(z+1)/z$, so the characteristic function $\phi_\alpha(t)$ for $Z$ in \eqref{eq:character} can be re-expressed as 
\[ \phi_\alpha(t) = \frac{\Gamma(\alpha - i \alpha t)}{\Gamma(\alpha)} = \frac{\Gamma(1 + \alpha - i \alpha t)/(\alpha - i \alpha t)}{\Gamma(1 + \alpha) / \alpha} = \frac{1}{1-it} \,\frac{\Gamma(1 + o_\alpha)}{\Gamma(1+o_\alpha)}, \]
where $o_\alpha$ are terms that vanish as $\alpha \to 0$.  Since the gamma function is continuous at 1, the limit of $\phi_\alpha(t)$ as $\alpha \to 0$ exists and is given by $1/(1-it)$.  This limit is exactly the characteristic function of $\expo(1)$, so the claim follows by L\'evy's continuity theorem.  
\end{proof}

\section{Small-shape gamma simulations}
\label{S:simulation}


Simulating from a gamma distribution with small shape parameter is a challenging problem that has attracted considerable attention in the literature; see, e.g., \citet{best1983}, \citet{kundu.gupta.2007}, \citet{tanizaki2008}, and \citet{xi.tan.liu.2013}.  Here we demonstrate that the limiting distribution result in Theorem~\ref{thm:limit} helps provide an improved algorithm for simulating gamma random variables with small shape parameter.  

For $Y \sim \gam(\alpha,1)$ with $\alpha$ near zero, let $Z=-\alpha \log Y$.  To simulate from the distribution of $Z$, one might consider an acceptance--rejection scheme; see, for example, \citet[][Chap.~20.4]{lange1999} or \citet[][Chap.~6.2.3]{givens.hoeting.2005}.  For this, one needs an envelope function that bounds the target density and, when properly normalized, corresponds to the density function of a distribution that is easy to simulate from.  By Theorem~\ref{thm:limit} we know that $Z$ is approximately $\expo(1)$ for $\alpha \approx 0$.  More precisely, the density $h_\alpha(z)$ of $Z$ has a shape like $e^{-z}$ for $z \geq 0$.  Therefore, we expect that a function proportional to an $\expo(1)$ density will provide a tight upper bound on $h_\alpha(z)$ for $z \geq 0$.  We shall similarly try to bound $h_\alpha(z)$ by an oppositely-oriented exponential-type density for $z < 0$, as is standard in such problems.  

The particular bounding envelope function $\eta_\alpha(z)$ is chosen to be as tight an upper bound as possible.  This is done by picking optimal points of tangency with $h_\alpha(z)$.  For this, we shall need a formula for $h_\alpha(z)$, up to norming constant, which is easily found:
\[ h_\alpha(z) = ce^{-z - e^{-z/\alpha}}, \quad z \in (-\infty, \infty). \]
The norming constant $c$ satisfies $c^{-1}=\Gamma(\alpha+1)$.  By following standard techniques, as described in \citet[][Chap.~20.4]{lange1999}, we obtain the optimal envelope function  
\[ \eta_\alpha(z) = \begin{cases} c e^{-z}, & \text{for $z \geq 0$}, \\ c w \lambda e^{\lambda z}, & \text{for $z < 0$}, \end{cases} \]
where $\lambda = \lambda(\alpha) = \alpha^{-1}-1$ and $w = w(\alpha) = \alpha / e(1-\alpha)$.  Plots of the (un-normalized) target density $h_\alpha(z)$ along with the optimal envelope $\eta_\alpha(z)$, for two small values of $\alpha$, are shown Figure~\ref{fig:envelope}.  The normalized envelope function $\eta_\alpha(z)$ corresponds to the density function of a mixture of two (oppositely-oriented) exponential distributions, i.e., 
\[ \frac{1}{1+w} \expo(1) + \frac{w}{1+w} \{-\expo(\lambda)\}, \]
which is easy to sample from using standard tools, such as {\tt runif} in R.  

\begin{figure}
\begin{center}
\subfigure[$\alpha=0.1$]{\scalebox{0.6}{\includegraphics{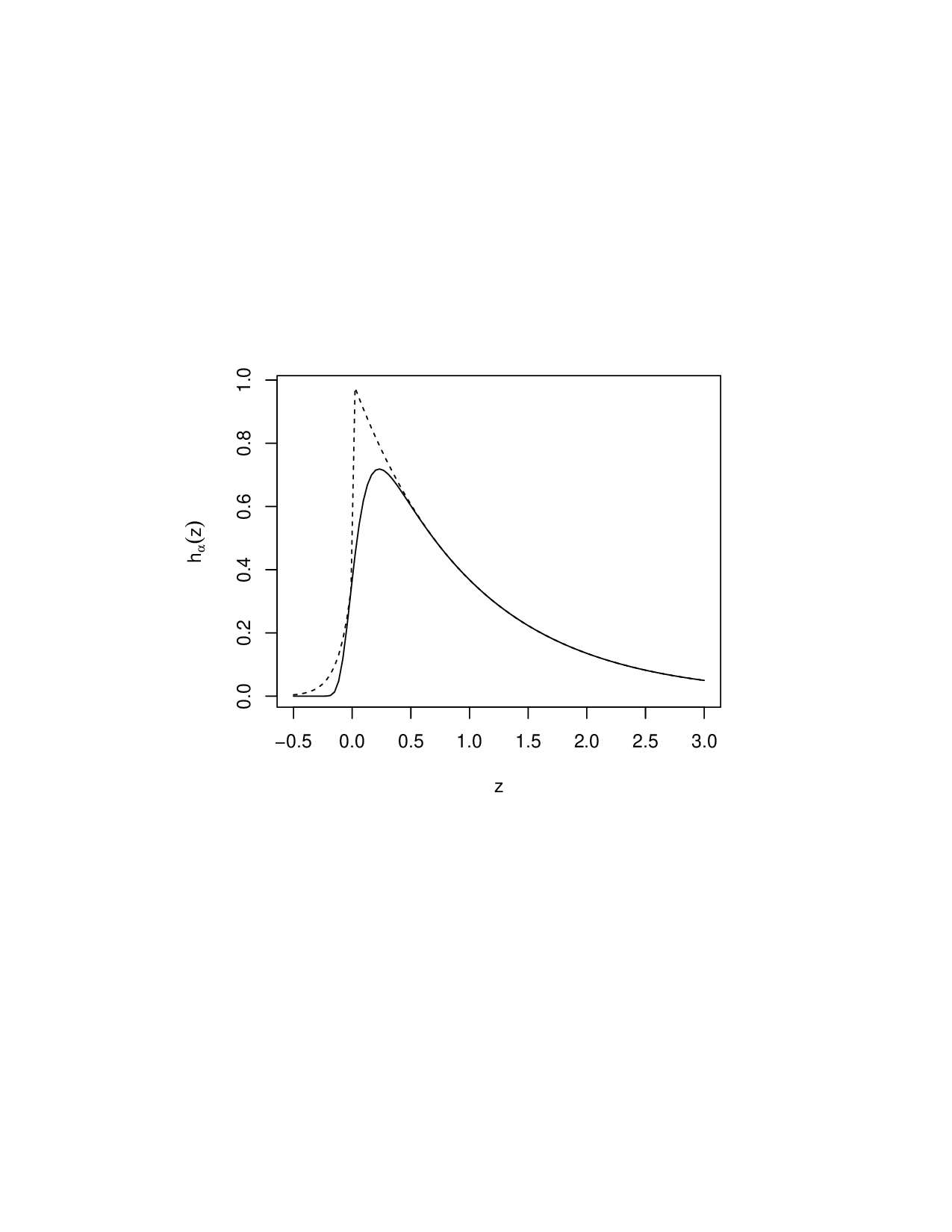}}}
\subfigure[$\alpha=0.05$]{\scalebox{0.6}{\includegraphics{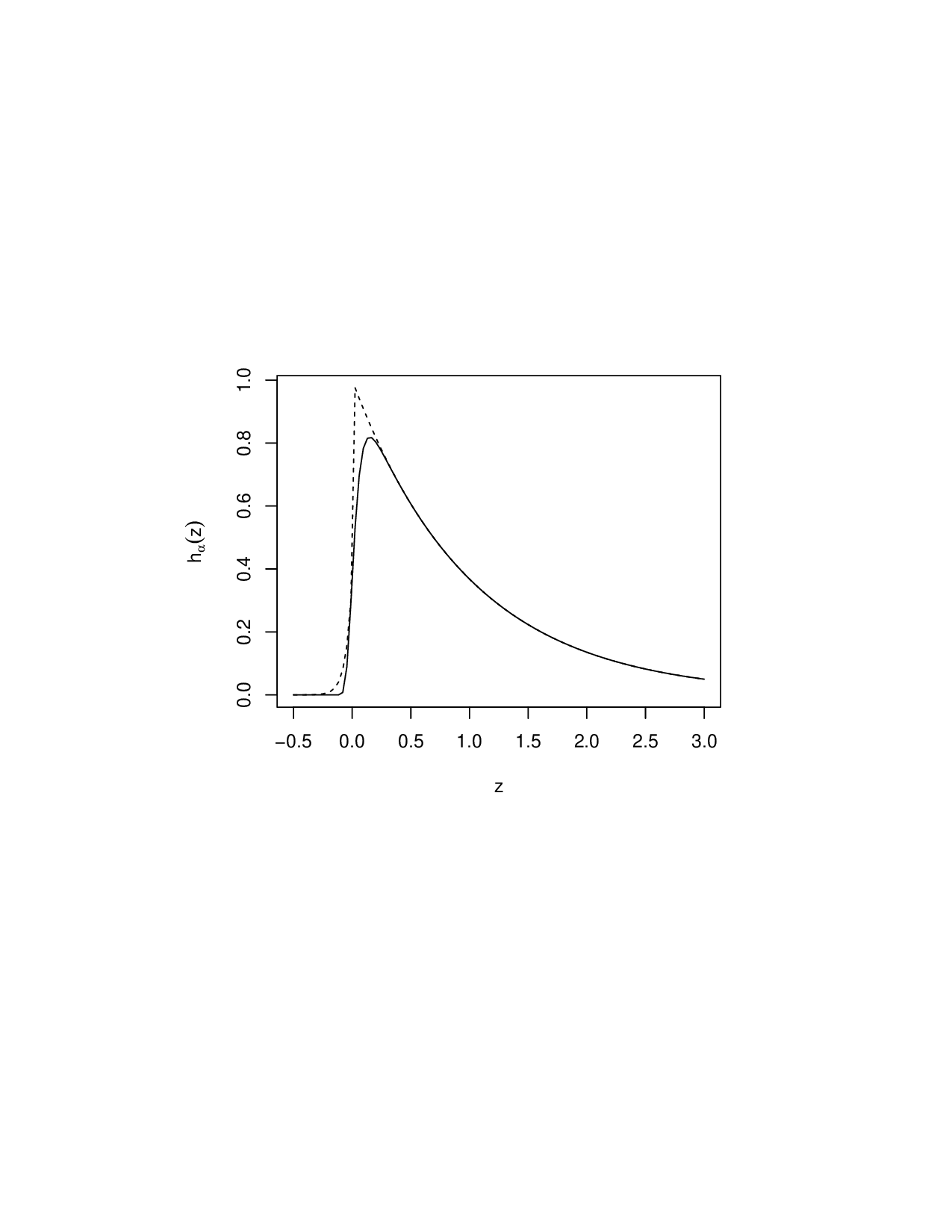}}}
\end{center}
\caption{Plots of the (un-normalized) target $h_\alpha(z)$ (solid) and envelope $\eta_\alpha(z)$ (dashed) for two values of $\alpha$.}
\label{fig:envelope}
\end{figure}

Pseudo-code for the proposed new program, named {\tt rgamss}, for simulating from a small-shape gamma distribution based on this acceptance--rejection scheme is presented in Algorithm~\ref{alg:code}.  R code is also available at \url{www.math.uic.edu/~rgmartin}, which provides the user with further options.  As a side note, since numerical precision can be lost in the final exponentiation step in Algorithm~\ref{alg:code}, we recommend returning the samples on the log-scale, which is the default in our R function.  


\begin{algorithm*}[t]
\begin{algorithmic}[1]
\State {\bf set} $\lambda \gets \lambda(\alpha)$, $w \gets w(\alpha)$, and $r \gets r(\alpha)$ as in the text.
\Loop
\State $U \gets \unif$ \hfill \#$\unif$ denotes the random number generator
\If {$U \leq r$} 
\State $z \gets -\log(U / r)$ 
\Else 
\State $z \gets \log(\unif) / \lambda$ 
\EndIf
\If {$h_\alpha(z) / \eta_\alpha(z) > \unif$} 
\State $Z \gets z$
\State {\bf break}
\EndIf 
\EndLoop
\State $Y \gets \exp(-Z / \alpha)$
\end{algorithmic}
\caption{-- Pseudo-code for the program {\tt rgamss} designed to simulate from a gamma distribution with small shape parameter $\alpha$.}
\label{alg:code}
\end{algorithm*}

\section{Efficiency comparisons}
\label{S:efficiency}

All methods based on the acceptance--rejection principle will provide genuine samples from the target distribution.  So, the most natural way to compare such methods is based on the acceptance rate.  It is common practice nowadays to employ an acceptance--rejection sampler inside a Markov chain Monte Carlo method, so having high acceptance rates results in shorter Monte Carlo run times.  

The acceptance rate $r(\alpha)$ for the proposed method is 
\begin{equation}
\label{eq:accept}
r(\alpha) = \{1+w(\alpha)\}^{-1} = \Bigl\{ 1 + \frac{\alpha}{e(1-\alpha)}\Bigr\}^{-1}. 
\end{equation}
It is clear from the approximation $r(\alpha) \approx 1-\alpha/e$ for $\alpha \approx 0$, that the acceptance rate converges to 1 as $\alpha \to 0$.  This indicates the high efficiency of the proposed method when $\alpha$ is small.  This is to be expected based on Theorem~\ref{thm:limit}: when $\alpha \approx 0$, $Z$ is approximately $\expo(1)$, so an algorithm that proposes $\expo(1)$ samples with probability $1/(1+w) \approx 1$ will likely accept the proposal.  To justify the claim in Section~\ref{S:intro} that the proposed method is more efficient than existing methods, all based on the accept--reject principle, we provide a comparison in terms of acceptance rates.  The methods being compared to {\tt rgamss} are Ahrens--Dieter \citep{ahrens.dieter.1974}, Best \citep{best1983}, and Algorithm~3 of Kundu--Gupta \citep{kundu.gupta.2007}; the methods of \citet{tanizaki2008} and \citet{xi.tan.liu.2013} were also considered, but these are not efficient enough to be compared with the others.  Plots of the acceptance rates for these methods, as a function of the shape parameter $\alpha$, are displayed in Figure~\ref{fig:accept}.  The proposed {\tt rgamss} has the highest acceptance rate over a range of small $\alpha$ values, namely $(0,0.3]$, and, therefore, is most efficient.  For $\alpha > 0.3$, the proposed method's efficiency drops below that of Kundu--Gupta but stays above Best and Ahrens--Dieter for $\alpha \in (0.3, 0.4]$ (not shown).  

There are additional numerical advantages to the proposed method beyond simulation efficiency.  Indeed, when $\alpha$ is very small, the gamma distribution is tightly concentrated around zero, so getting practically non-zero values can be challenging.  For example, when $\alpha=0.001$, nearly half of the samples returned by the Ahrens--Dieter method are exact zeros.  The Kundu--Gupta method has similar difficulties, since $U^{1/\alpha}$, for $U \sim \unif(0,1)$, is often practically zero.  The proposed method, on the other hand, actually works on the log-scale, a more appropriate scale for very small numbers, so one can readily obtain genuine non-degenerate samples of $\log Y$ for very small values of $\alpha$.

\begin{figure}
\begin{center}
\scalebox{0.8}{\includegraphics{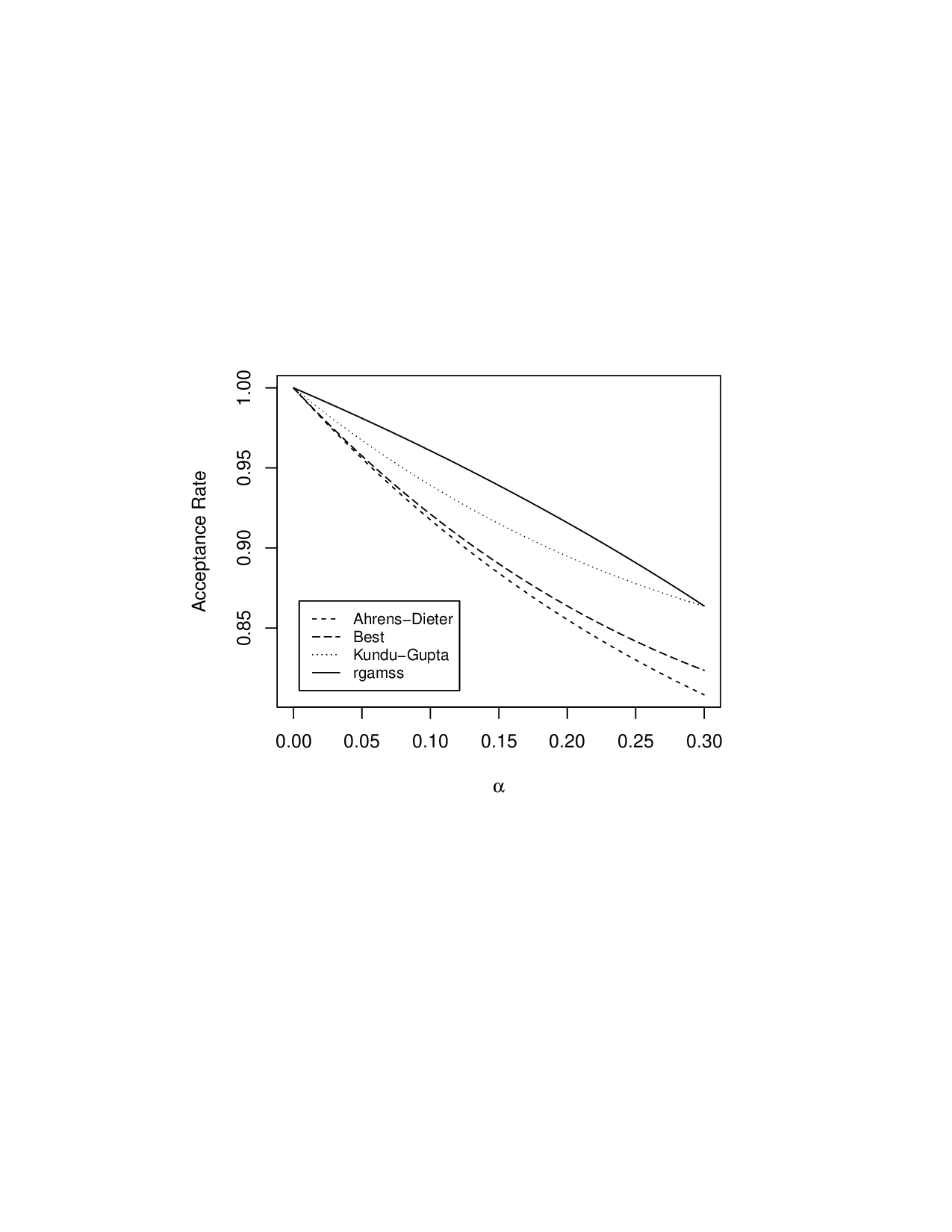}}
\end{center}
\caption{Plot of the acceptance rates for the four indicated methods as a function of the shape parameter $\alpha \in (0, 0.3]$.}
\label{fig:accept}
\end{figure}




\section*{Acknowledgments}

This work is partially supported by the U.S.~National Science Foundation, grants DMS--1007678, DMS--1208833, and DMS--1208841.

\bibliographystyle{apalike}
\bibliography{/Users/rgmartin/Dropbox/Research/mybib}

\end{document}